\documentclass{fmcad}


\usepackage{graphicx}
\usepackage{amsmath,amsfonts}
\usepackage{listings}
\usepackage{tcolorbox}
\usepackage{xcolor}
\usepackage{algorithm}
\usepackage{algorithmic}
\usepackage{xspace}
\usepackage{hyperref}                   
\usepackage{cleveref}
\usepackage{multirow}
\usepackage{booktabs}
\usepackage{subfig}
\usepackage{stfloats}

\usepackage{amsthm}
\theoremstyle{plain}
\newtheorem{theorem}{Theorem}[section]   
\newtheorem{lemma}[theorem]{Lemma}       
\theoremstyle{definition}
\newtheorem{definition}[theorem]{Definition}

\newif\iffinal
\finalfalse

\iffinal
    \newcommand{\todo}[1]{} 
    \newcommand{\an}[1]{}
    \newcommand{\da}[1]{}
    \newcommand{\cb}[1]{}
\else
    \newcommand{\todo}[1]{\textcolor{red}{\textbf{TODO: #1}}}
    \newcommand{\an}[1]{\textcolor{red}{[#1 -AN]}}
    \newcommand{\da}[1]{\textcolor{blue}{[#1 -DA]}}
    \newcommand{\cb}[1]{\textcolor{magenta}{[#1 -CB]}}
\fi


\title{Towards SMT Solver Stability via Input Normalization}
\author{
\IEEEauthorblockN{
Daneshvar Amrollahi\IEEEauthorrefmark{1},
Mathias Preiner\IEEEauthorrefmark{1},
Aina Niemetz\IEEEauthorrefmark{1},
Andrew Reynolds\IEEEauthorrefmark{2},\\
Moses Charikar\IEEEauthorrefmark{1},
Cesare Tinelli\IEEEauthorrefmark{2},
Clark Barrett\IEEEauthorrefmark{1}
}
\IEEEauthorblockA{\IEEEauthorrefmark{1}Stanford University, USA\\
\texttt{\{daneshvar, preiner, niemetz, moses, barrett\}@cs.stanford.edu}}
\IEEEauthorblockA{\IEEEauthorrefmark{2}The University of Iowa, USA\\
\texttt{\{andrew-reynolds, cesare-tinelli\}@uiowa.edu}}
}


\newcommand{\assertion}{\ensuremath{\alpha}\xspace}
\newcommand{\seqassertion}{A\xspace}

\newcommand{\R}{\ensuremath{\mathbb{R}}\xspace}

\DeclareMathAlphabet{\mathpzc}{OT1}{pzc}{m}{it}
\newcommand{\N}{\ensuremath{\mathpzc{N}}\xspace}

\newcommand{\role}[2]{\ensuremath{\mathit{role}(#1, #2)}\xspace}

\newcommand{\pat}[1]{\ensuremath{\mathit{P}(#1)}\xspace}
\newcommand{\superpattern}[2]{\ensuremath{\mathit{SP}(#1, #2)}\xspace}
\newcommand{\EC}{\ensuremath{\mathit{EC}}\xspace}

\newcommand\cpp{{C\nolinebreak[4]\hspace{-.05em}\raisebox{.4ex}{\tiny\bf++}}\xspace}

\newcommand{\benchsmtlib}{\textit{smtlib}\xspace}
\newcommand{\benchmariposa}{\textit{mariposa}\xspace}
\newcommand{\benchdice}{\ensuremath{\text{Dice}_F^*}\xspace}
\newcommand{\benchverid}{\ensuremath{\text{VeriBetrKV}_D}\xspace}
\newcommand{\benchveril}{\ensuremath{\text{VeriBetrKV}_L}\xspace}
\newcommand{\benchkomodod}{\ensuremath{\text{Komodo}_D}\xspace}
\newcommand{\benchkomodos}{\ensuremath{\text{Komodo}_S}\xspace}
\newcommand{\benchwasm}{\ensuremath{\text{vWasm}_F}\xspace}

\begin{document}

\maketitle


\begin{abstract}

In many applications, SMT solvers are utilized to solve similar or identical tasks
over time.  
Significant variations in performance due to small changes in the input
are not uncommon and lead to frustration for users.  This 
sort of \emph{stability} problem represents an important usability
challenge for SMT solvers.
We introduce an approach for mitigating the stability problem based on normalizing solver
inputs.  We show that a perfect normalizing algorithm exists but is
computationally expensive.  We then describe an approximate algorithm
and evaluate it on a set of benchmarks from related
work, as well as a large set of benchmarks sampled from
SMT-LIB.  Our evaluation shows that our approximate normalizer
reduces runtime variability with minimal overhead and is
able to normalize a large class of mutated benchmarks to a unique normal form.

\end{abstract}


\section{Introduction}

SMT solvers are used to solve a large variety of problems in academia and
industry~\cite{DBLP:series/natosec/Bjorner16,DBLP:series/txtcs/KroeningS16,DBLP:conf/cav/Rungta22}.
As these solvers are integrated into more and more workflows, they are
increasingly used in situations where there are many calls to a solver with
identical or similar queries.
For example, a software verification tool may run a
regression suite nightly to check that some software meets its
specification, and for the most part, this nightly run does not differ much, if at
all, from the previous night's run.
A common pain point in such situations is that SMT solver performance can vary
significantly, even if there are only minor changes.  This has been termed
the \emph{stability} problem: queries that are semantically similar
or identical may require vastly different amounts of time to solve.  Or, even
worse, some minor changes may result in a formerly solved query not being
solved at all.

As part of an NSF-supported project~\cite{nsfstudy}, we
spoke with a large number of SMT stakeholders and cataloged and ranked over 100
different issues, based on how often
they came up in interviews.  In this ranking, instability was ranked as the second-highest area
of concern (behind only requests for better diagnostic output when the solver
is unable to solve a problem).  The instability concern was also highlighted in a
keynote talk by Neha Rungta at CAV 2022~\cite{DBLP:conf/cav/Rungta22} on
the use of SMT solvers at Amazon Web Services
and
has been identified by another senior manager at Amazon as one of two top
priorities for improving their SMT solving workflows~\cite{JonesPrivate}.

There are two kinds of changes that can introduce instability: changes to the
input and changes to the solver.  While both are important, this paper focuses on the first:
\emph{changes to the input}.  In other words, we are interested in
reducing the sensitivity of solving time to minor changes in an input
formula, especially if those changes are semantics-preserving.


Two common misconceptions about stability are worth noting.  First,
instability should not be confused with poor solver performance.
Improving solver performance is an important and
worthy goal; however, it is orthogonal to addressing the stability problem.
Our focus is on making solver
performance more \emph{consistent} in the face of mutations to its input.  Importantly, many users in our study reported that
they would welcome \emph{any} improvement to stability, even if it came at the
cost of a degredation in performance.
The other misconception is the failure to recognize that
the stability problem is linked to computational complexity.  Because the SMT
problem is NP-hard (or worse, depending on the theory being used), solvers try to guide worst-case exponential algorithms in such
a way that solutions are found quickly when possible.  However, even a small change
in the input can trigger a different search path, which could result in
an exponentially worse (or better) runtime.  This is similar to the well-known
``butterfly effect'' in chaos theory.
Despite this, as we show in this paper, there \emph{are} steps that can be taken to
\emph{reduce} instability.


SMT solvers typically already use deterministic data
structures and algorithms to eliminate easily avoidable sources of instability.
In a sequential execution setting,
the remaining source of variability is their
heuristics, primarily their branching heuristics, which typically break ties
based on the \emph{order} in which declarations and assertions appear in
the input formula.
Our key idea is to reduce this variability by using the \emph{structure of the input problem}
to determine the order of declarations and assertions.
Our approach uses the principle of \emph{normalization}: we attempt
to map classes of semantically equivalent inputs to the same normal form.
Note that if perfect normalization could be achieved in this setting, it would result in perfect
stability,
with respect to the semantically equivalent input classes,
as the input to the core solver would be the same in every case.
We present an approach aimed at getting closer to that ideal
without introducing too much overhead.
A notable feature of our approach is that it is agnostic to the underlying solver.  It
can thus be used to improve the stability of any SMT solver with respect to
changes in the input.

We consider normalization with respect to
a set of basic, semantics-preserving transformations:
($i$) reordering of assertions;
($ii$) reordering of operands of commutative operators;
($iii$) reordering and renaming of user-defined symbol declarations; and
($iv$) replacing anti-symmetric operators 
     by their converse.
These transformations are representative of the kinds of changes that happen in
the real-world scenarios motivating our work.
For example, in a program verification workflow, if a function is moved or
renamed, this could result in a reordering of declarations and
assertions, a different name for a user-defined symbol, or both.
%
These transformations are also a superset of the
solver input transformations considered in previous work on \emph{measuring}
instability~\cite{Zhou2023}.  We will refer to these transformations as
\emph{mutations} and to inputs that have been transformed this way as
\emph{mutated} inputs.
We address the following research questions:

\begin{enumerate}
    \item Is it possible to design a normalizing algorithm that utilizes
      these mutations to map all mutated variants
      to a single unique normal form?
    
    
    \item If such an algorithm exists, what is its time complexity?
      
    
    \item How closely can an efficient algorithm approximate the ideal algorithm?
    
\end{enumerate}


After covering some background in \Cref{sec:background}, we formalize 
the problem in \Cref{sec:formal} and
answer the first two questions, showing that
such an algorithm does indeed exist but is as hard as graph isomorphism.
We provide an answer to the third question in the remainder of the paper.
\Cref{sec:methodology} introduces
an algorithm that approximates the ideal algorithm, and
\Cref{sec:evaluation} presents an
evaluation of our implementation, showing that it significantly improves SMT solver stability on benchmarks from~\cite{Zhou2023} and
from the SMT-LIB benchmark library~\cite{smtlib2024}.  Finally, \Cref{sec:conclusion} concludes.

\noindent
\textbf{\emph{Related work.}}
The importance of the issue of stability in SMT solving has been raised in other
work~\cite{galois,komodo,ironclad,clement}.
Dodds~\cite{galois}, highlights the problem of proof fragility under
changes in verification tools.
Ferraiuolo et al.~\cite{komodo} mention
proof instability as \emph{the most frustrating recurring problem}, especially
when proof complexity increases as a result of reasoning about procedures with many
instructions and complex specifications.
In Hawblitzel et al.~\cite{ironclad}, verification instability is observed in large formulas and
non-linear arithmetic due to different options for applicable heuristics.
Leino et al.~\cite{clement} identify
\emph{matching loops}---caused by 
certain forms of quantifiers
that lead an SMT solver
to repeatedly instantiate a limited set of quantified formulas---as a key factor
contributing to instability in verification times, and describe
techniques to detect and prevent them.

More relevant to our work is the work of Zhou et al.~\cite{Zhou2023,shake}.
In~\cite{Zhou2023}, they pioneer an effort to detect and quantify instability and introduce a tool for this task called Mariposa. 
They show that mainstream SMT solvers such as Z3~\cite{z3} and cvc5~\cite{cvc5}
exhibit instability on a set of F\(^{*}\)~\cite{fsharp} and Dafny~\cite{dafny}
benchmarks~\cite{vwasm,komodo,veribetrkv,veribetrkvl,serval,dicesharp}. 
They consider the benchmark-modifying mutations of symbol renaming
and assertion reordering, as well as solver-modifying mutations via the use of
different random seeds, and use a statistical approach to identify instability
arising from these mutations.  We include an evaluation of our technique on
their benchmarks, with the notable difference that we do not
include solver-modifying mutations.  This is because we
aim to improve stability with respect to input changes, rather than
%
across changes to a solver.
In~\cite{shake}, Zhou et al. identify irrelevant context in a query
as one source of instability and propose a novel approach to filter out
such context to improve solver stability.  This approach is
complementary to our own, and combining the two is an interesting direction
for future work.

Also closely related is the work of Weber~\cite{weber16scrambling}, 
which introduces a normalizer
designed to reverse the effects of mutations used to \emph{scramble} benchmarks for
the SMT-COMP competition at the time.
However, the normalizer introduced in~\cite{weber16scrambling}
was specifically designed to expose a weakness in the scrambling
algorithm, namely that symbols were renamed but not reordered.  This weakness
makes it possible to achieve perfect normalization efficiently, which their
algorithm does.  However, in the presence of the more expressive and realistic
mutations we consider, their normalizer performs poorly, as we show in \Cref{sec:evaluation}.



\section{Background}
\label{sec:background}

\newcommand{\boolS}{\texttt{Bool}\xspace}
\newcommand{\usersymbols}{\ensuremath{\mathit{user}}\xspace}


We work in the context of many-sorted logic (e.g.,
\cite{enderton2001mathematical}), where we assume an infinite set of variables
of each sort and the usual notions of signatures, terms, formulas, assignments, and
interpretations.
We assume a signature $\Sigma$ consisting of sort symbols and sorted function symbols.
It is convenient to assume that $\Sigma$ has a distinguished sort \boolS,
for the Booleans, and to represent relation symbols as function symbols 
whose return sort is \boolS.  We also assume the signature includes equality.
Symbols in $\Sigma$ are partitioned into \emph{theory symbols} (e.g.,
$=,\land,\lor,+,-,0,1$) and \emph{user-defined symbols} (e.g., $f$, $g$, $x$,
$y$).  We assume some \emph{background theory} restricts the 
theory symbols to have fixed interpretations, whereas the interpretation of
user-defined symbols is left unrestricted.

We represent formulas as finite \emph{sequences} of symbols in
prefix notation, where each symbol is either a theory symbol or a user-defined
symbol. 
This causes no ambiguity when each symbol has a fixed arity.
If $S$ is a sequence $\langle s_1,\dots,s_n\rangle$, we write $|S|$ to
denote $n$, the length of the sequence, and $ S_i $ to denote the $i^{\mathit{th}}$ element of the sequence.  We write $s\in S$ to mean that $s$
occurs in the sequence $S$, and write $S \circ S'$ for the sequence obtained by
appending $S'$ at the end of $S$.
We write $\usersymbols(S)$ to mean the sequence obtained by
deleting all theory symbols in $S$, e.g., $\usersymbols(\langle
x,+,y,-,x\rangle)=\langle x,y,x \rangle$.  We denote the set of integers
between $m$ and $n$ inclusive, where $n \ge m$, as $[m,n]$.  And we abbreviate
$[1,n]$ as $[n]$.


\begin{figure}[t]
        \centering
        \begin{minipage}{0.45\textwidth}%
          \begin{lstlisting}[basicstyle=\ttfamily\scriptsize, frame=single]
(set-logic QF_UFLIA)

(declare-fun f (Int) Int)  (declare-const v Int) 
(declare-const w Int)      (declare-const x Int) 
(declare-const y Int)      (declare-const z Int)

(assert (>= (+ (f x) y) (- v 12)))
(assert (< (+ x y) (* x z)))
(assert (>= (+ (f y) x) (- w 12)))
(assert (< (+ y x) (* x x)))
(assert (< (+ x y) (* y y)))
(assert (< (+ y x) (* y v)))

(check-sat)
          \end{lstlisting}
        \end{minipage}
\vspace*{-.5em}
        \caption{Running Example.}
\vspace*{-.5em}        
        \label{fig:motivating-example}
\end{figure}

\begin{figure}[t]
    \centering
    \begin{minipage}{0.45\textwidth}%
            \begin{lstlisting}[basicstyle=\ttfamily\scriptsize, frame=single]
(assert (>= (+ (f x) y) (- v 12)))
(assert (< (+ x y) (* x z)))
            \end{lstlisting}
    \end{minipage}
    \begin{minipage}{0.45\textwidth}%
            \begin{lstlisting}[basicstyle=\ttfamily\scriptsize, frame=single]
(assert (> (* u4 u2) (+ u2 u3)))
(assert (>= (+ (g u2) u3) (- u1 12)))
            \end{lstlisting}
    \end{minipage}%
\vspace*{-.5em}
    \caption{Two different mutated versions of the same assertions.}
    \vspace*{-1.5em}
    \label{fig:scramble-example}
\end{figure}
A formula in the SMT-LIB 2.6 format~\cite{BarFT-SMTLIB} is shown in
\Cref{fig:motivating-example}.  We use this as a running example throughout the paper.
The example includes arithmetic theory symbols, a
user-defined function $f$, and user-defined constants $v$, $w$, $x$, $y$, and
$z$.  We will refer to the sequence of six assertions in the running example as
$\langle \beta_1, \dots \beta_6 \rangle$.
\Cref{fig:scramble-example} shows two different representations of the
first two assertions in the example: the first is from the example and the second is a mutation of it.  In particular, the order of the assertions has been swapped, the
operands of the \texttt{*} operator have been reordered, the user-defined symbols have been
renamed (with \texttt{w}, \texttt{x}, \texttt{y}, and \texttt{z}
renamed to \texttt{u1} through \texttt{u4},
respectively, and \texttt{f} renamed to \texttt{g}), and the assertion using
\texttt{<} has been rewriiten to use \texttt{>}.


\section{Formalization}
\label{sec:formal}

Consider again the first two assertions from the example.
Here, the user-defined symbols are \{$f$, $x$, $y$, $v$, $z$\}, and the theory
symbols are \{$>=$, $<$, $+$, $-$, $*$, $12$\}.
When written as a sequence in prefix
notation, $\beta_1$ is $\langle >=,+,f,x,y,-,v,12\rangle$.
Similarly, $\beta_2$ is $\langle <,+,x,y,*,x,z \rangle$.

Recall that mutations include four operations:
($i$) reordering
assertions; ($ii$) reordering operands of commutative operators; ($iii$)
reordering and renaming symbols; and
($iv$) replacing anti-symmetric operators.  We defer ($iv$) to the end of this section,
as it can easily be handled separately.  To formalize the others, we introduce some definitions.

\begin{definition}[Shuffle Set]
  The \emph{shuffle set} of a sequence $\seqassertion$ is defined as
  $S(\seqassertion) = \{\seqassertion' \mid \seqassertion'$ is a
    permutation of $\seqassertion\}$.
\end{definition}

\noindent
For example, $S(\langle \beta_1,\beta_2 \rangle) = \{ \langle \beta_1,\beta_2 \rangle, \langle \beta_2,\beta_1 \rangle\}$.

\begin{definition}[Commutative Reordering Set]
Let $\assertion$ be a formula, possibly containing (binary) commutative operators.  The
\emph{commutative reordering set} of $\assertion$ is
defined as $C(\assertion) = \{\assertion' \mid \assertion'$ is the result of
swapping the operands of zero or more of these commutative operators in $\assertion\}$.  For a sequence $\seqassertion$
of $n$ formulas, $C(\seqassertion)$ is the set of all sequences $\langle \assertion'_1,\dots,\assertion'_n\rangle$ where for each $i\in[n]$, $\assertion'_i \in C(\assertion_i)$.
If $X$ is a set (containing either formulas or sequences of formulas), then $C(X)=\{x' \mid x' \in C(x) \text{ for some } x\in X\}$.
\end{definition}

\noindent
For example, $C(\beta_1) = \{\beta_1, \langle >=,+,y,f,x,-,v,12\rangle\}$.  Note that
there is only one entry in $C(\beta_1)$ besides $\beta_1$ itself, because $+$ is the
only commutative operator appearing in $\beta_1$.  On the other hand, there are
four elements in $C(\beta_2)$, since $\beta_2$ has two commutative operators, $+$ and $*$.
Consequently, $C(\langle \beta_1, \beta_2 \rangle)$ has eight elements, representing all combinations of
one formula each from $C(\beta_1)$ and $C(\beta_2)$.


\begin{definition}[Pattern]
  \label{def:pattern}
  Let $\assertion$ be a formula.  The
  \emph{pattern} of $\assertion$, written $\pat{\assertion}$, is a sequence of the
  same length as $\assertion$, defined for each $i\in[|\assertion|]$ by
  \[\vspace*{-.3em}
    \pat{\assertion}_i =
    \begin{cases}
      \assertion_i
        & \text{if $\assertion_i$ is a theory symbol,}\\[1ex]
      @1
        & \begin{aligned}[t]
            &\text{if $\assertion_i$ is the first user-defined symbol}\\
            &\text{appearing in $\assertion$,}
          \end{aligned}\\[1ex]
      \pat{\assertion}_j
        & \begin{aligned}[t]
            &\text{if $\assertion_i$ is a user-defined symbol and}\\
            &\text{there exists }j\in[i-1]\text{ with }\assertion_j=\assertion_i,
          \end{aligned}\\[1ex]
        @k
          & \begin{aligned}[t]
              &\text{otherwise, where } k = 1\ \ +\\
              &\bigl|\{\assertion_j\mid j\in[i-1],\,\assertion_j\text{ user-defined}\}\bigr|.
            \end{aligned}
    \end{cases}
  \]
\end{definition}

\noindent
For convenience, we assume that each symbol $@k$ is a fresh
constant\footnote{The SMT-LIB 2.6 standard reserves symbols starting with $@$ for
internal use by solvers, so this assumption is a reasonable one.}
of the same sort as the symbol it is replacing, so that if $\assertion$ is well-sorted, then so is \pat{\assertion}.
We call introduced symbols starting with ``@'' \emph{pattern symbols} and
sequences in the co-domain of $P$ \emph{patterns}.
We define a total order $\prec$ on patterns to be the lexicographic order induced
by some total order on formula symbols.\footnote{An obvious choice for this
order (and the one we use) is the lexicographic order on the string
representations of theory and pattern symbols.}
For our example assertions, we have $\pat{\beta_1} = \langle >=,+,@1,@2,@3,-,@4,12\rangle$
and $\pat{\beta_2} = \langle <,+,@1,@2,*,@1,@3\rangle$.

We lift this notation to sequences and sets of sequences.  To explain how, we need two more
definitions.

\newcommand{\Conj}{\ensuremath{\mathit{Conj}}\xspace}
\newcommand{\Unconj}{\ensuremath{\mathit{Unconj}}\xspace}

\begin{definition}
  Given a sequence of formulas $\seqassertion = \langle
  \assertion_1,\dots,\assertion_n\rangle$, the
  \emph{conjoining} of $\seqassertion$, $\Conj(\seqassertion)$  is the formula
  $\langle \wedge \rangle \circ \assertion_1 \circ \assertion_2 \circ \dots
  \circ \assertion_n$.  Similarly, given a formula
  $\langle \wedge \rangle \circ \assertion_1 \circ \assertion_2 \circ \dots
  \circ \assertion_n$, where $\assertion_i$ is a
  formula for $i\in[n]$, the \emph{unconjoining} of $\alpha$, written
  $\Unconj(\alpha)$ is the formula sequence $\langle
  \assertion_1,\dots,\assertion_n\rangle$.
\end{definition}

\noindent
For a sequence $\seqassertion$ of formulas, we define
$\pat{\seqassertion} = \Unconj(\pat{\Conj(\seqassertion)})$.
If $X$ is a set (containing either formulas or sequences of formulas), then $\pat{X}=\{\pat{x} \mid x \in X\}$.
We similarly lift $\prec$ to sequences of formulas: if $\seqassertion$ and $\seqassertion'$
are sequences of patterns, then $\seqassertion \prec \seqassertion'$ iff $\assertion_1 \circ \dots \circ \assertion_{|\seqassertion|}
\prec \assertion'_1 \circ \dots \circ \assertion'_{|\seqassertion'|}$.
We next define renaming and normalization.

\begin{definition}[Renaming]
Let $V$ be a set of variables.
A \emph{renaming} $R$ is an injective function from pattern symbols to
  $V$.  For a formula $\assertion$, $R(\assertion)$ is defined to be a
sequence of the same size as $\assertion$, defined as follows:
\[
R(\assertion)_i =
\begin{cases}
  \assertion_i & \text{if } \assertion_i \text{ is a theory symbol,}\\
  R(\assertion_i) & \text{if } \assertion_i \text{ is a pattern symbol.}
\end{cases}
\]
For a sequence of formulas $\seqassertion=\langle
\assertion_1,\dots,\assertion_n\rangle$, $R(\seqassertion) = \langle R(\assertion_1),\dots,R(\assertion_n)\rangle$.
If $X$ is a set (containing either formulas or sequences of formulas), then
$R(X)=\{R(x) \mid x \in X\}$.\footnote{
When representing an assertion sequence as an SMT-LIB 2.6 script, we declare user-defined symbols in lexicographic order, so a
renaming can affect the order of declarations.}
\end{definition}

\noindent




\begin{definition}[Normalizing Function]
A function $N$ from sequences of formulas to sequences of formulas is said to be \emph{normalizing} if, for every sequence $\seqassertion$ of formulas:
\begin{enumerate}
  \item $N(\seqassertion) = R(\seqassertion')$ for some $\seqassertion' \in \pat{C(S(\seqassertion))}$ and some renaming $R$; and
  \item if $M_1= R_1(M'_1)$ and $M_2= R_2(M'_2)$, with $R_1,R_2$ renamings and with
    $M'_1,M'_2 \in \pat{C(S(\seqassertion))}$, then $N(M_1) = N(M_2)$.
\end{enumerate}
\end{definition}

\noindent
We can now introduce our first research question: does there exist a normalizing function?  We
show that the question can be answered affirmatively.

\begin{definition}[Normalizing Function]
Let $\N$ be defined as follows.  Given a sequence of formulas $\seqassertion$, let $\N(\seqassertion)$
be the $\prec$-minimal element of $\pat{C(S(\seqassertion))}$.
\end{definition}





\begin{theorem} \label{thm:normalizing}
  Function $\N$ is normalizing. 
\end{theorem}

\noindent
Due to lack of space, proofs for this and other theorems in this paper are
omitted.\footnote{For the reviewers' convenience, the proofs are provided in an appendix.}

\subsection{Complexity}

We have shown the existence of a normalizing function.  The next question
is whether normalization can be done efficiently.  An informal argument that
computing the normalization of an arbitrary set of formulas is at least as hard
as graph isomorphism can be found in~\cite{stackexchange}, and a formal proof
can be found in~\cite{weber16scrambling}.
The question of
whether graph isomorphism can be solved in polynomial time is a long-standing
open problem.

\begin{theorem}\label{thm:normGI}
  Let N be a normalizing function.  Then, computing $N(\seqassertion)$ for an arbitrary
  $\seqassertion$ is as hard as solving graph isomorphism.
\end{theorem}

\noindent
The use of $C$ is not essential for the proof to succeed.
Normalizing just the result of shuffling and renaming is already as hard as graph isomorphism.

\subsection{Anti-symmetric Operators}

%
As mentioned above, mutations can randomly replace anti-symmetric operators with their dual
operator. For example, \texttt{(< (+ x y) (* x z))} could be changed to
\texttt{(> (* x z) (+ x y)
)}. In general, we allow mutations that transform expressions of the form
\( A \; op \; B \) into \( B \; op' \; A \), where (\(op \), \(op'\)) pairs
include: (\texttt{>}, \texttt{<}), (\texttt{>=}, \texttt{<=}),
(\texttt{bvugt}, \texttt{bvult}), (\texttt{bvuge}, \texttt{bvule}), \ldots.


A normalization algorithm can easily handle anti-symmetric operators, simply by
choosing one representative operator for each pair and forcing all assertions to use only the chosen
operators.  For example, if the first operator in each pair
listed above is the chosen one, then \Cref{fig:antisymm} shows the result
of normalizing the first two assertions in our running example.
Notice that the second assertion is modified by replacing \texttt{<} with
\texttt{>} and swapping the operands but the first assertion is unchanged
because it is already using the chosen operator.

\begin{figure}[t]
    \begin{minipage}{0.48\textwidth}
        \centering
        \begin{lstlisting}[basicstyle=\ttfamily\scriptsize, frame=single]
(assert (>= (+ (f x) y) (- v 12)))
(assert (< (+ x y) (* x z)))
        \end{lstlisting}
    \end{minipage}
    \hfil
    \begin{minipage}{0.48\textwidth}
        \begin{lstlisting}[basicstyle=\ttfamily\scriptsize, frame=single]
(assert (>= (+ (f x) y) (- v 12)))
(assert (> (* x z) (+ x y)))
        \end{lstlisting}
    \end{minipage}
    \vspace*{-.5em}
    \caption{Original (top) and normalized (bottom) assertions.}
    \vspace*{-1.5em}
    \label{fig:antisymm}
\end{figure}

\section{Approximating a Normalization Algorithm} \label{sec:methodology}



As a first step towards a general practical algorithm for normalization, we
describe a heuristic procedure designed to handle two of the four mutations:
shuffling and renaming.  We leave the handling of the other operations to
future work.  We expect support for normalizing antisymmetric operator
replacement to be straightforward (as described above), while normalizing
commutative operand swapping will be more challenging.  Note that shuffling and
renaming are also the two operations used to mutate benchmarks in the Mariposa
work (the closest related work)~\cite{Zhou2023}.  We describe our algorithm at
a conceptual level here, and discuss several optimizations necessary to make it
work well in practice in~\Cref{sec:optimization}.
Our algorithm consists of three steps: ($i$) Sorting the assertions; ($ii$)
Renaming all symbols; and ($iii$) Sorting the assertions again.
In the rest of this section, we discuss these steps in detail.

\noindent
\textbf{\emph{Sorting the assertions.}}
Step one is to sort the assertions.  The challenge is to do this in a way that does
not depend on the names of user-defined symbols.
The key idea is to use patterns.  In particular, to order assertions $\alpha$ and
$\alpha'$, we can compare \pat{\alpha} and \pat{\alpha'} using the $\prec$ order.
For instance, considering agian assertions from our running example,
$\beta_5$ is $\langle <,+,x,y,*,y,y\rangle$, and $\beta_6$ is $\langle
<,+,y,x,*,y,v\rangle$, so we have
$\pat{\beta_5} = \langle <, +, @1, @2, *, @2, @2 \rangle$ and $\pat{\beta_6} = \langle <, +, @1,
@2, *, @1, 3 \rangle$. The first difference in the patterns is in the sixth position.
Assuming $@1$ is ordered before $@2$, we have
that $\pat{\beta_6} \prec \pat{\beta_5}$, so we can conclude that $\beta_6$ should
be placed before~$\beta_5$.

Note, however, that it is possible for two formulas to have the same
pattern.  Thus, after sorting according to patterns, we obtain
an ordered list of equivalence classes $\EC_1, \cdots, \EC_n$ with the following features: 
\begin{enumerate}
    \item $\alpha$ and $\alpha'$ belong to the same equivalence class iff $\pat{\alpha} = \pat{\alpha'}$.
    \item $\alpha \in \EC_i$ and $\alpha' \in \EC_j$ where $i < j$ iff $\pat{\alpha} < \pat{\alpha'}$.
\end{enumerate}

The next question is whether we can easily order the assertions belonging to
the same equivalence class.  We give an efficient approximation method.
For this, we need the notions of role and super-pattern.

\begin{definition}[Role]
  \label{def:role}
  The role of a symbol $s$ in a formula \assertion, denoted \role{s}{\assertion}, is $0$ if
    $s\not\in \assertion$ and the index of the earliest occurrence of $s$ in
    $\usersymbols(\assertion)$, otherwise.  The role of $s$ in a set of formulas is the
    multiset consisting of all the roles played by $s$ in the formulas in the set.
\end{definition}
\noindent
For example, consider the role of $y$ in
$\beta_5$.  First of all, we compute
$\usersymbols(\beta_5)$, which is $\langle x,y,y,y \rangle$.  We can then see
that $y$ occurs first at the second position, so $\role{y}{\beta_5}=2$.
Similarly, $\role{x}{\{\beta_4,\beta_5,\beta_6\}} = \{1, 1, 2\}$.


\begin{definition}[Super-pattern]
  \label{def:super-pattern}
    The super-pattern of a symbol $s$ over a sequence $X$ of sets $X_1, \ldots,
    X_{n}$, denoted $\superpattern{s}{X}$, is the sequence of roles of the symbol in each set:
    $\superpattern{s}{X} = \langle \role{s}{X_1}, \role{s}{X_2}, \dots, \role{s}{X_n} \rangle$.
\end{definition}

\noindent
Let $\EC$ be a sequence of formula equivalence classes.  The super-pattern of
$\EC$ captures the role of a symbol across all equivalence classes, while
treating the formulas in each equivalence class as unordered.

\begin{figure}[t]
    \centering
    \begin{minipage}{0.45\textwidth}%
      \begin{lstlisting}[basicstyle=\ttfamily\scriptsize, frame=single]
(assert (< (+ x y) (* x z)))
(assert (< (+ y x) (* y v)))

(assert (< (+ y x) (* x x)))
(assert (< (+ x y) (* y y)))

(assert (>= (+ (f x) y) (- v 12)))
(assert (>= (+ (f y) x) (- w 12)))
      \end{lstlisting}
    \end{minipage}
    \vspace*{-.5em}
    \caption{Assertions sorted by pattern.}
    \vspace*{-1.5em}
    \label{fig:motivating-example-eqclass}
\end{figure}

To illustrate, recall the example from
\Cref{fig:motivating-example}. \Cref{fig:motivating-example-eqclass}
shows the result of sorting the assertions by pattern, resulting in three equivalence classes, each
separated by an empty line.
The patterns of the equivalence classes in $\EC = \{\EC_1, \ldots, \EC_3\}$,
from top to bottom, are as follows:
\begin{itemize}
  \item[] $\EC_1$: $\langle <, +, @1, @2, *, @1, @3 \rangle$
  \item[] $\EC_2$: $\langle <, +, @1, @2, *, @2, @2 \rangle$
  \item[] $\EC_3$: $\langle >=, +, @1, @2, @3, -, @4, 12 \rangle$
\end{itemize}

\noindent
Now, suppose we want to order the formulas in $\EC_3$. We compare the
super-patterns of the first different pair of user-defined symbols, in this
case, $x$ and $y$.
The roles of $x$ throughout the equivalence classes are:
\begin{itemize}
  \item[] $\role{x}{\EC_1} = \{ \role{x}{\beta_2}, \role{x}{\beta_6} \} = \{ 1, 2 \}$
  \item[] $\role{x}{\EC_2} = \{ \role{x}{\beta_4}, \role{x}{\beta_5} \} = \{ 1, 2 \}$
  \item[] $\role{x}{\EC_3} = \{ \role{x}{\beta_1}, \role{x}{\beta_3} \} = \{ 2, 3 \}$
\end{itemize}

\noindent
Therefore, applying the definition of super-pattern for $x$ yields
$\superpattern{x}{\EC} =
\langle \{ 1, 2 \}, \{ 1, 2 \} , \{ 2, 3 \}\rangle$.  It is not hard to see
that the super-pattern for $y$ is the same, so the two assertions cannot be
distinguished by looking at $x$ and $y$.  The next pair of different user-defined
variables also consists of $x$ and $y$.  However, the last pair is $v$ and $w$.
Following the same process, we find that
$\superpattern{v}{\EC}=\langle\{0,4\},\{0,0\},\{0,4\}\rangle$ and
$\superpattern{w}{\EC}=\langle\{0,0\},\{0,0\},\{0,4\}\rangle$.  Now, all we
need is a way to order different super-patterns.

\begin{definition}[Integer multiset order]
  Given multi-sets of integers $m_1$ and $m_2$, $m_1 < m_2$ iff the sequence of
  nondecreasing elements of $m_1$ is lexicographically less than the sequence of
  nondecreasing elements of $m_2$.
\end{definition}

\begin{definition}[Super-pattern order]
  For super-patterns $s_1$ and $s_2$, $s_1 < s_2$ iff $s_1$ comes before $s_2$
  when compared using the lexicographic order induced by the integer multiset order.
\end{definition}

\noindent
Thus, when comparing super-patterns, we compare the entries in the sequences one by
one using the integer multiset order. The
first two entries in $\superpattern{v}{\EC}$ and $\superpattern{w}{\EC}$ are
$\{0,4\}$ for $v$ and $\{0,0\}$ for $w$.
Because $\langle 0,0\rangle < \langle 0,4\rangle$, we can conclude that $\superpattern{w}{\EC} < \superpattern{v}{\EC}$.
Thus, we should switch the order of assertions in the last equivalence class.
Similarly, by computing super-patterns for $z$ and $v$, we can see that we
should keep the order of the assertions in the first equivalence class.

It is possible for all of the super-patterns of corresponding symbols in two assertions in
the same equivalence class to be the same.  In this case, our heuristic
algorithm fails, and the assertion order is left unchanged.  For example, for
$\EC_2$, the only user-defined symbols available for comparison are $x$ and
$y$, and they have the same super-pattern.  Thus, we leave these assertions
in their original order for now.
\Cref{alg:assertion-ordering} shows the full algorithm for ordering two assertions.

\begin{algorithm}
    \caption{Algorithm for ordering assertions $A$ and $B$ given $\EC$, a
      sequence of equivalence classes (with respect to pattern equality) of assertions.}
    \label{alg:assertion-ordering}
    \begin{algorithmic}
\IF{$\pat{A} \neq \pat{B}$}
    \RETURN $\pat{A} \prec \pat{B}$
\ENDIF

\FOR{$i \gets 1$ to $|\usersymbols(A)|$}
    \STATE $u \gets \usersymbols(A)_i$
    \STATE $v \gets \usersymbols(B)_i$
    \IF{$u = v$}
        \STATE \textbf{continue}
    \ENDIF
    \IF{$\superpattern{u}{\EC} = \superpattern{v}{\EC}$}
        \RETURN $\superpattern{u}{\EC} < \superpattern{v}{\EC}$
    \ENDIF
\ENDFOR
\RETURN \textbf{inconclusive}
\end{algorithmic}
\end{algorithm}

\noindent
\textbf{\emph{Renaming all symbols.}}
After sorting the assertions according to \Cref{alg:assertion-ordering}, we
rename all the symbols in the assertions.  We use a renaming $\R$ that maps a
variable whose pattern symbol is $@k$ to $X_k$.  More
precisely, if $\seqassertion$ is the sequence of assertions after sorting, we
replace $\seqassertion$ with $\R(\pat{\seqassertion})$.
\Cref{fig:after-renaming} shows the assertions from our running example
after sorting and renaming.

\begin{figure}[t]
    \centering
    \begin{minipage}{0.45\textwidth}%
      \begin{lstlisting}[basicstyle=\ttfamily\scriptsize, frame=single,escapechar=|]
(assert (< (+ |\uservar{1}| |\uservar{2}|) (* |\uservar{1}| |\uservar{3}|)))
(assert (< (+ |\uservar{2}| |\uservar{1}|) (* |\uservar{2}| |\uservar{4}|)))

(assert (< (+ |\uservar{2}| |\uservar{1}|) (* |\uservar{1}| |\uservar{1}|)))
(assert (< (+ |\uservar{1}| |\uservar{2}|) (* |\uservar{2}| |\uservar{2}|)))

(assert (>= (+ (|\uservar{5}| |\uservar{1}|) |\uservar{2}|) (- |\uservar{4}| 12)))
(assert (>= (+ (|\uservar{5}| |\uservar{2}|) |\uservar{1}|) (- |\uservar{6}| 12)))
      \end{lstlisting}
    \end{minipage}
    \vspace*{-.5em}
    \caption{Assertions after sorting and renaming.}
    \vspace*{-1.5em}
    \label{fig:after-renaming}
\end{figure}

\noindent
\textbf{\emph{Sorting the assertions again.}}
After renaming, there is one more step that can improve the normalizer.  It is
based on the observation that within an equivalence class, different
assertion orders are possible, depending on the initial order,
when all symbols in a pair of assertions have the same super-patterns.  This can be
partially addressed by lexicographically sorting each equivalence
class after renaming.
This ensures that if we have two benchmarks for which the first two steps produce the same set of
assertions, but in different orders, then these two benchmarks will be
normalized the same way.

Looking again at \Cref{fig:after-renaming}, we see that assertions in the first
and last equivalence classes are already in sorted order.  However, the
assertions in equivalence class 2 should be reordered.  Recall that in the
previous step, we did not have a way to order these assertions, but now there
is an unambiguous order for them.
It is important to note that our algorithm does not guarantee the
normalization property.  The reason for this incompleteness is that when
assertions cannot be distinguished by super-patterns, there can be different
normal forms for benchmarks, even if one is a mutation of the other (via
shuffling and renaming).  However, our algorithm works well in
practice, as we show next.





\section{Experiments}
\label{sec:evaluation}


We implemented our normalization algorithm in \cpp.  We evaluate its
effectiveness on three dimensions: \emph{normalization effectiveness},
\emph{stability}, and \emph{runtime}.

We use two sets of benchmarks.  The first set, \benchsmtlib, is obtained
by randomly selecting 50~benchmarks from each family (or all of them, if there
are fewer than 50) in the SMT-LIB benchmark library~\cite{smtlib2024}.
A family, in this context, consists of all benchmarks in a single
leaf directory of the filesystem heirarchy.
Note that even though our normalization algorithm as described in
\Cref{sec:methodology} is, in theory, applicable to all of SMT-LIB without limitation, our
current implementation does not support the normalization of
algebraic datatypes due to implementation-level complexities.
We thus excluded benchmarks with algebraic datatypes from our evaluation.
Still, \benchsmtlib contains 41,166 benchmarks from 1,581
benchmark families.
We group benchmarks from this set into the (so-called)
divisions used by the annual SMT-COMP~\cite{smtcomp} to keep the number of
categories manageable.

The second set, \benchmariposa,
consists of the benchmarks used in~\cite{Zhou2023} (as provided
in~\cite{mariposazenodo}), originating from program verification projects
written in Dafny~\cite{dafny}, Serval \cite{serval}, and F* \cite{fsharp}.
\benchmariposa contains 16,622 benchmarks and is
divided into six families:
\benchdice~\cite{dicesharp} (1536), \benchkomodod and
\benchkomodos~\cite{komodo} (2,054 and 773),
\benchverid~\cite{veribetrkv} (5,170),
\benchveril \cite{veribetrkvl} (5,334) and
\benchwasm~\cite{vwasm} (1,755).



For each benchmark, we produce 10 mutations by randomly applying assertion
shuffling and renaming.  Our renaming uses a fixed enumeration of names (i.e.,
a benchmark wth $n$ user-defined symbols always uses the first $n$ of these names), but shuffles
the order in which they appear in assertions (and thus the order in which they
are declared, since we declare symbols in the order in which they appear).
We ran all experiments on a cluster of 
48 machines with AMD Ryzen 9 7950X CPUs
using a time limit of 60 seconds and a memory limit of 8GB.  We used this
same time and memory limit to separately limit the mutation step, the normalization step, and
the solving step.\footnote{This is long enough to be able to catch cases when the normalizer is
slow and short enough to help keep a large evaluation computationally
tractable.  Using the same timeout for all stages also keeps things simple.}
We do not report results for benchmarks that timed out during mutation or normalization.
For the \benchsmtlib set, we exclude 6 benchmarks that timed out during the
mutation step and 85 that timed out during the normalization phase (of these,
58 timed out during parsing, before ever getting to the normalization
algorithm), for any of the 10 mutated versions.
In \benchmariposa, we only exclude 2 benchmarks, both of which timed out during
normalization.
Overall, this resulted in 41,075 eligible benchmarks in \benchsmtlib and 16,620
in \benchmariposa.


\subsection{Normalization Effectiveness}\label{sec:uniqueness}

In our first experiment, we evaluate how closely
our implementation of an efficient normalization algorithm approximates the
ideal algorithm on both benchmark sets.
For this, we measure, for each benchmark: ($i$) the number of distinct outputs produced by our
normalizer for the 10 mutations (the best possible is 1; the worst possible is 10); and ($ii$)
the \emph{similarity} of the 10 normalized outputs, computed as the average percentage of
identical lines among all pairs of outputs.
\Cref{tab:unique} shows our results.  Each row starts with a category
name (the name of the division or family) and a pair of numbers ($x/y$) denoting the number of excluded benchmarks $x$
vs.~the total number of benchmarks $y$ in that category.  We then list
the number of considered benchmarks (\textit{\#Bench}, equal to $y-x$) and the average number of
distinct benchmarks produced by the 10 mutations, before normalization (\textit{pre}) and after normalization, without
(\textit{wo-SP}) and with (\textit{w-SP}) the use of super-patterns.  We next show the
average and maximum runtime for the normalization algorithm.  Finally, we show
the average similarity among the mutated benchmarks, again comparing \textit{pre},
\textit{wo-SP}, and \textit{w-SP}.
\Cref{fig:unique_outputs_smtlib} shows histograms of the number of
distinct benchmarks among the 10 mutations before and after (with super-patterns) normalization.
\begin{figure*}[t]
    \centering
    \subfloat[Before Normalization]{\includegraphics[width=0.45\textwidth]{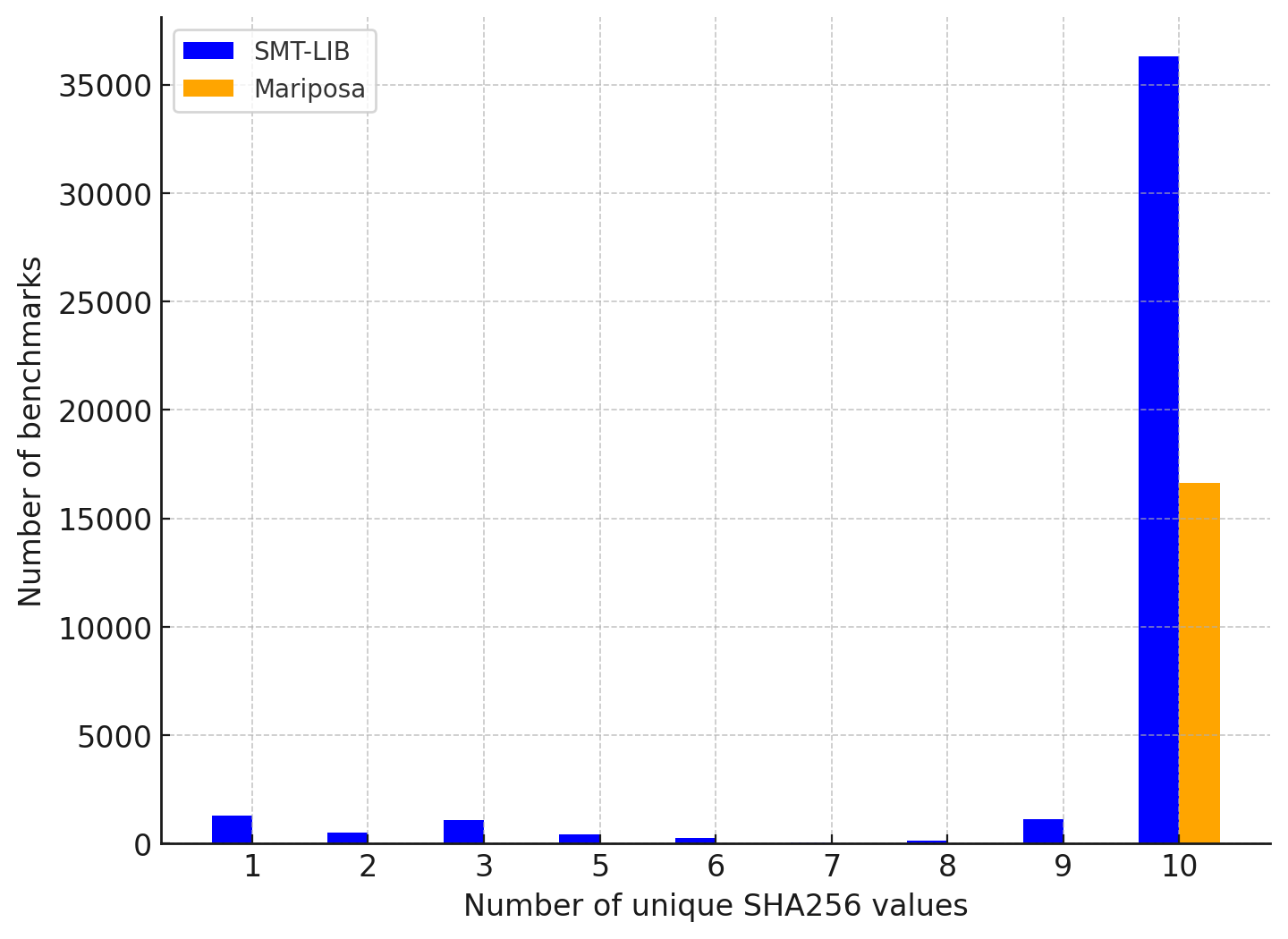}
    \label{fig:unique_outputs_smtlib_scramble}}
    \subfloat[After Normalization]{\includegraphics[width=0.45\textwidth]{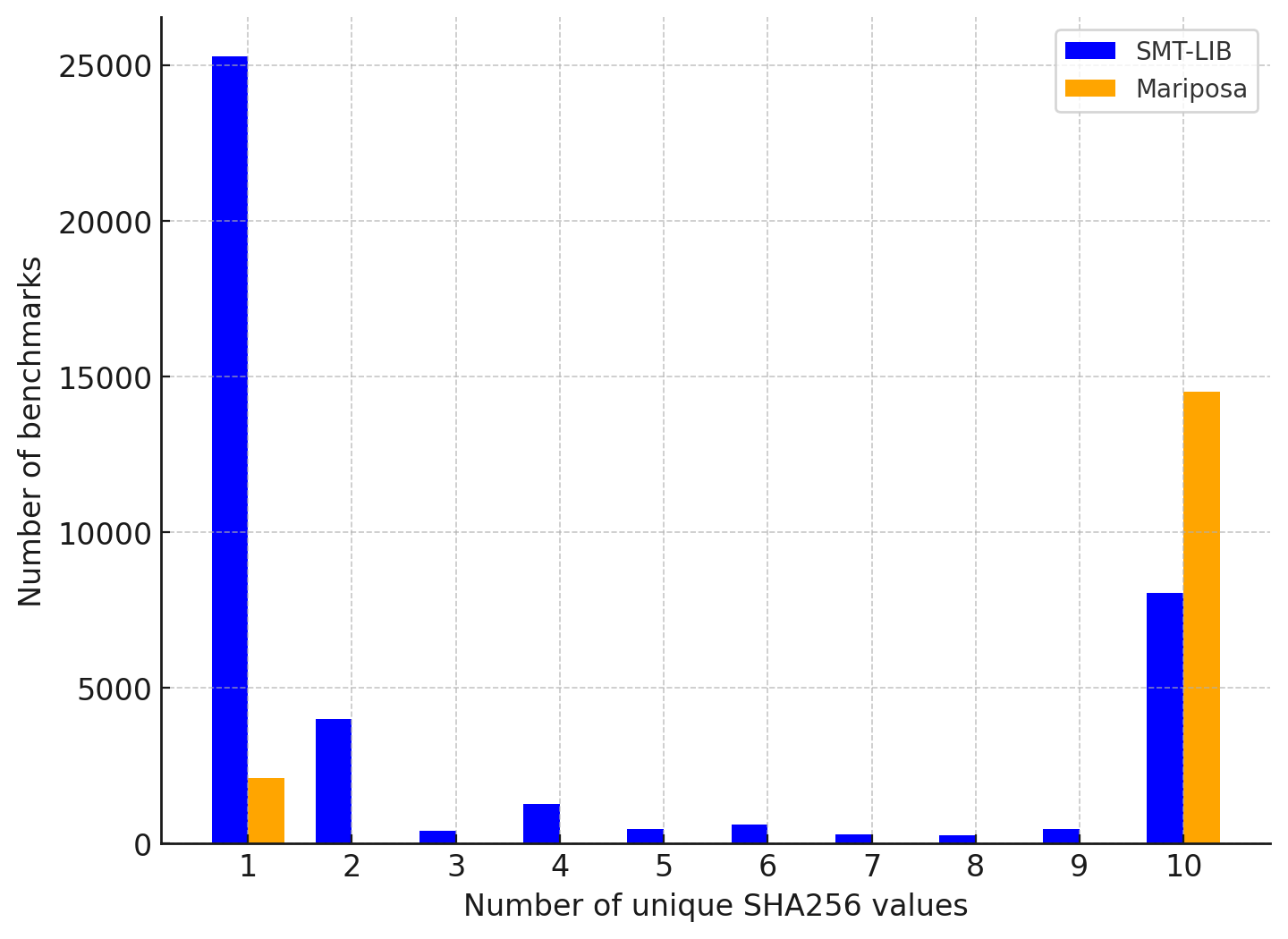}
      \label{fig:unique_outputs_smtlib_norm}}
    \caption{Number of distinct benchmarks before and after normalization.}
    \label{fig:unique_outputs_smtlib}
\end{figure*}

\begin{table*}[t!]
  \centering
  \scalebox{0.92}{%

\begin{tabular}{lr@{\hskip .5em}|@{\hskip .5em}r@{\hskip .5em}|@{\hskip .5em}rrr@{\hskip .5em}|@{\hskip .5em}rr@{\hskip .5em}|@{\hskip .5em}rrr}
  \toprule
  
  \textbf{Division (\benchsmtlib)} & (Excluded/Total)
  & \textbf{\#Bench}
  & \multicolumn{3}{c|@{\hskip .5em}}{\textbf{\#Unique}}
  & \multicolumn{2}{c|@{\hskip .5em}}{\textbf{Time (s)}}
  & \multicolumn{3}{c}{\textbf{Similarity (\%)}} \\
  
  &
  &
  & \textit{pre}
  & \textit{wo-SP}
  & \textit{w-SP}
  & \textit{Avg}
  & \textit{Max}
  & \textit{pre}
  & \textit{wo-SP}
  & \textit{w-SP} \\
  \midrule
  
  Arith                             &      (0/883)  &      883  &      9.33  &     1.05  &     1.02 &   0.0010  &    0.1720  &  7.07  &  99.39  &  99.62 \\
  BitVec                            &     (5/793)  &      788  &      9.36  &     1.07  &     1.03 &   0.0139  &    3.5230  &  5.81  &  99.01  &  99.48 \\
  Equality                          &     (0/1,570)  &     1,570  &     10.00  &     8.48  &     2.79 &   0.0075  &    0.1020  &  0.05  &  47.59  &  98.48 \\
  Equality+LinearArith              &     (0/2,595)  &     2,595  &      9.99  &     7.34  &     5.35 &   0.0030  &    0.0860  &  1.29  &  65.56  &  86.78 \\
  Equality+MachineArith             &     (2/905)  &      903  &     10.00  &     1.25  &     1.22 &   0.0295  &    3.6140  &  0.76  &  97.97  &  98.18 \\
  Equality+NonLinearArith           &    (0/1,452)  &     1,452  &      9.70  &     2.98  &     1.64 &   0.0080  &    0.4980  &  3.14  &  88.89  &  98.48 \\
  FPArith                           &      (0/319)  &      319  &      9.48  &     1.03  &     1.03 &   0.0000  &    0.0010  &  5.25  &  99.61  &  99.61 \\
  QF\_Bitvec                        &   (42/3,320)  &      3,278 &      9.55  &     4.94  &     4.78 &   0.7248  &   38.7570  &  4.63  &  74.17  &  79.88 \\
  QF\_Equality                      &      (0/667)  &       667 &     10.00  &     2.63  &     2.16 &   0.0306  &    1.0710  &  0.73  &  85.67  &  93.24 \\
  QF\_Equality+Bitvec               &   (20/1,208)  &      1,188 &      9.87  &     4.49  &     4.44 &   0.1222  &   13.3070  &  2.94  &  71.07  &  74.06 \\
  QF\_Equality+LinearArith          &     (12/812)  &       800 &      9.95  &     3.66  &     3.38 &   0.0590  &    6.6460  &  1.58  &  83.17  &  89.38 \\
  QF\_Equality+NonLinearArith       &     (0/589)  &       589 &     10.00  &     6.73  &     6.22 &   0.2908  &   29.6690  &  1.76  &  71.40  &  85.08 \\
  QF\_FPArith                       &     (0/4,813)  &      4,813 &      9.08  &     2.80  &     2.52 &   0.0002  &    0.0290  &  13.17 &  90.93  &  92.30 \\
  QF\_LinearIntArith                & (6/10,801)  &      10,795 &      9.98  &     6.35  &     4.21 &   0.2728  &   37.2840  &  8.62  &  78.79  &  88.67 \\
  QF\_LinearRealArith               &    (3/1,466)  &      1,463 &      9.95  &     3.72  &     3.26 &   0.2230  &   30.3610  &  6.77  &  91.19  &  93.86 \\
  QF\_NonLinearIntArith             &     (0/653)  &       653 &      9.88  &     3.04  &     2.64 &   0.0752  &    4.5400  &  4.77  &  94.00  &  95.69 \\
  QF\_NonLinearRealArith            &    (1/3,343)  &      3,342 &      9.83  &     1.83  &     1.67 &   0.0620  &   10.5110  &  8.42  &  96.05  &  97.76 \\
  QF\_Strings                       &    (0/4,977)  &      4,977 &      6.21  &     2.69  &     2.45 &   0.0004  &    0.0190  &  44.50 &  91.75  &  94.93 \\
  
  \midrule
  
    \multicolumn{2}{l}{\textbf{Family (\benchmariposa)}}\\
  
  \midrule
  
    \benchdice                       &          (0/1536) &     1536    &  10.00 &     10.00 &     10.00 &    5.4847 &    12.4670  &  0.93  &  99.45  &  98.75 \\
    \benchkomodod                    &          (0/2054) &     2054    &  10.00 &     10.00 &     10.00 &    1.2750 &    3.7500   &  0.18  &  34.87  &  88.28 \\
    \benchkomodos                    &          (2/773) &     771    &  10.00 &     1.00 &     1.00 &    0.0004 &    0.0790     &  1.09  &  100.00 &  100.00 \\
    \benchverid                      &          (0/5170) &     5170    &  10.00 &     10.00 &     10.00 &    0.0790 &    9.1740   &  0.19  &  18.15  &  87.30 \\
    \benchveril                      &          (0/5334) &     5334    &  10.00 &     10.00 &     10.00 &    1.2533 &    9.6420   &  0.17  &  12.73  &  81.85 \\
    \benchwasm                       &          (0/1755) &     1755    &  10.00 &     3.16 &     3.15 &   0.0657  &    4.2050    &  0.01  &  99.53  &  99.52 \\
  
  \bottomrule
  \end{tabular}

  } 
  \vspace{1ex}
  \caption{Number of unique outputs, normalization time, and similarity of normalized outputs.}
  \vspace{-2ex}
  \label{tab:unique}
\end{table*}

First, we observe that before normalization, we nearly always have
10 distinct versions of a benchmark.  Exceptions (e.g., 
in the QF\_Strings category) occur when there are not a sufficient number of
user-defined symbols and assertions to create 10 distinct versions.  
For the \benchsmtlib benchmark set,
normalization always significantly reduces the number of distinct
versions of a benchmark, and results with super-patterns always improve
over those without super-patterns---sometimes significantly (e.g.,
in the Equality division).

For other benchmarks, however, especially the QF\_Equality+NonLinearArith
division and several families in the \benchmariposa benchmark set, our
normalizer struggles to produce a small number of distinct outputs.  
Investigating these benchmarks
reveals large numbers of assertions with lots of symmetry.
In these cases, the super-pattern comparison fails to distinguish between
assertions with identical patterns, resulting in different normal forms because
of assertions appearing in a non-deterministic order.
It is worth highlighting that for these benchmarks, our normalization algorithm
still achieves a high level of similarity.  As we discuss below, there is also an
improvement in stability for these benchmarks.  This suggests that full
normalization is not necessary to improve stability.  Improved
similarity is already helpful.
%

As mentioned above, over all of the benchmarks, only 87 are excluded due to
timeouts during normalization, and these are typically very large benchmarks
where parsing alone takes most or all of the time.  For the vast majority of the
remaining benchmarks, the normalization overhead is
extremely low (a fraction of a second on average as shown in
\Cref{tab:unique}).  The aggregate overhead, computed as the sum of the
normalizing time for all benchmarks divided by sum of the normalization plus
solving time for all benchmarks, is less than $0.8\%$.  

We also conducted exploratory experiments on a random subset of
\benchsmtlib (4,461 benchmarks) to evaluate the performance of the normalizer
in~\cite{weber16scrambling} in terms of uniqueness of the normalized output.
We observed that it was able to produce unique outputs for only 12\%, of the
benchmarks, which is vastly outperformed by our approach (87\% on this set).
This result, however, is not unexpected since the algorithm
in~\cite{weber16scrambling} was designed for the purpose of exploiting a
specific weakness in the SMT-COMP's scrambling algorithm in use at the time rather than with
general-purpose normalization in mind.

\subsection{Stability}\label{sec:solver-stability}

In our second experiment, we evaluate how our normalization algorithm affects
solver stability.
We use two SMT solvers: cvc5~\cite{cvc5} and Z3~\cite{z3}. These are natural
choices, as both are used extensively and support a wide range of theories.  We
limit our evaluation to these two solvers, as they are the only solvers that
support all of our benchmarks.
For the \benchmariposa benchmark set, we limit our evaluation to Z3.  This is
because these benchmarks come from a specific use case targeting Z3 (as
observed already by Zhou et al.~\cite{Zhou2023}), and most of them
are unsolvable by cvc5.


%
We use penalized runtime (PR-2) and Median Absolute Deviation (MAD) as metrics.
Penalized runtime the sum of the time taken for solved
benchmarks
plus a penalty equal to two times the timeout for each unsolved benchmark
(timeouts, memory outs, or other errors).
PR-2 thus combines elements of both total
time and number of solved benchmarks into a single metric.
MAD measures how much variation there is among a set of results, with
lower numbers indicating less variation and higher numbers indicating more
variation.  Applying MAD to PR-2 scores is thus a good proxy for stability.

Results are shown in~\Cref{tab:par2}
for each division (in \benchsmtlib) and family (in
\benchmariposa).  We report results on the benchmarks before normalization (\emph{no norm.})
and after normalization with super-patterns (\emph{norm.}).
The column labeled \emph{avg} contains the average PR-2 score, computed as
follows.  Recall that we create ten mutations for each benchmark.  Each
mutation is based on a specific random seed.  We compute the total PR-2 score for
all the benchmarks for each seed separately.  We then take the
average of these ten PR-2 scores.
The MAD is also computed over these ten cumulative 
PR-2 scores (one for each seed). 

\begin{table}[t!]
  \centering
  \scalebox{0.795}{\newcommand{\be}[1]{\textbf{#1}}
\begin{tabular}{l@{\hspace{.35em}}|@{\hspace{.35em}}r@{\hspace{.35em}}r@{\hspace{.35em}}r@{\hspace{.35em}}r@{\hspace{.35em}}|@{\hspace{.35em}}r@{\hspace{.35em}}r@{\hspace{.35em}}r@{\hspace{.35em}}r}
  \toprule
  \textbf{Division (\benchsmtlib)}
& \multicolumn{4}{c}{\textbf{cvc5 PR-2}}
& \multicolumn{4}{c}{\textbf{Z3 PR-2}}
\\

  & \multicolumn{2}{c}{\textit{no norm.}}
  & \multicolumn{2}{c|@{\hspace{.7em}}}{\textit{norm.}}
  & \multicolumn{2}{c}{\textit{no norm.}}
  & \multicolumn{2}{c}{\textit{norm}}
\\

  & \textit{avg.}
  & \textit{MAD}
  & \textit{avg.}
  & \textit{MAD}
  & \textit{avg.}
  & \textit{MAD}
  & \textit{avg.}
  & \textit{MAD}
  \\

    \midrule
Arith                       &    17,161 &     93.5 &  17,264 &   \be{2.9} &   21,532 &     120.6 &  11,346  &   \be{0.3} \\
BitVec                      &    29,141 &     41.0 &  29,247 &  \be{14.6} &   34,245 &     121.4 &  28,248  &  \be{14.6} \\
Equality                    &   101,912 &    286.0 & 101,778 &   \be{4.0} &  107,718 &     108.8 & 107,885  &  \be{10.7} \\
Equality+LinearArith        &    77,933 &    208.4 &  77,620 &  \be{32.8} &   72,603 &     254.5 &  73,041  &  \be{22.2} \\
Equality+MachineArith       &    91,935 &     23.4 &  91,518 &   \be{6.8} &   74,037 &     452.8 &  73,539  &  \be{37.2} \\
Equality+NonLinearArith     &    97,462 &    351.0 &  95,961 &  \be{10.8} &   89,615 &     375.5 &  90,551  &  \be{17.1} \\
FPArith                     &    18,089 &     49.6 &  17,920 &  \be{49.5} &   14,371 &     143.2 &  14,214  &  \be{28.5} \\
QF\_Bitvec                  &   149,591 &    282.8 & 153,082 & \be{208.9} &  104,282 &     751.8 &  94,428  & \be{174.6} \\
QF\_Equality                &     2,408 &     49.0 &   2,435 &   \be{3.7} &    1,285 &      64.4 &   1,195  &   \be{1.0} \\
QF\_Equality+Bitvec         &    40,712 &    157.1 &  40,246 & \be{153.7} &   36,282 &     265.5 &  35,870  &  \be{84.7} \\
QF\_Equality+LinearArith    &    14,927 &     46.4 &  16,070 &  \be{44.9} &    8,278 &     296.4 &   4,862  & \be{179.5} \\
QF\_Equality+NonLinearArith &    34,344 &    422.9 &  34,234 & \be{276.0} &   25,811 &     374.6 &  25,731  & \be{175.3} \\
QF\_FPArith                 &    28,971 &    289.6 &  28,063 & \be{108.1} &   56,349 &     324.4 &  55,262  & \be{106.3} \\
QF\_LinearIntArith          &   242,429 &    402.8 & 244,127 & \be{148.7} &  151,935 &     606.8 & 132,783  & \be{345.7} \\
QF\_LinearRealArith         &    30,567 &    313.6 &  29,539 & \be{101.0} &   23,547 &     310.8 &  21,260  & \be{160.4} \\
QF\_NonLinearIntArith       &    34,201 &    615.7 &  34,267 &  \be{26.3} &   23,914 &     194.7 &  22,265  &   \be{6.8} \\
QF\_NonLinearRealArith      &    39,951 &     86.8 &  39,377 &  \be{27.9} &   26,186 &     188.3 &  28,016  & \be{131.7} \\
QF\_Strings                 &    34,092 &    341.4 &  33,255 & \be{121.6} &   56,970 &     321.4 &  56,420  & \be{158.4} \\

\midrule

  \multicolumn{2}{l}{\textbf{Family (\benchmariposa)}}\\

\midrule
  \benchdice               &       --  &     --   &     --  &        --  &     7,531 &      559.3 &   8,951 & \be{545.3} \\
  \benchkomodod            &       --  &     --   &     --  &        --  &    12,561 &      264.2 &  13,327 &  \be{59.9} \\
  \benchkomodos            &       --  &     --   &     --  &        --  &     3,284 &       79.9 &   3,181 &  \be{17.2} \\
  \benchverid              &       --  &     --   &     --  &        --  &    12,623 &      414.4 &  12,612 & \be{169.0} \\
  \benchveril              &       --  &     --   &     --  &        --  &    24,889 &      203.6 &  23,971 & \be{172.7} \\
  \benchwasm               &       --  &     --   &     --  &        --  &     2,919 &        9.5 &   3,018 &   \be{0.2} \\
\bottomrule
\end{tabular}
 }
  \vspace{1ex}
  \caption{
  PR-2 and MAD scores on mutated benchmarks before and after normalization.
}
\vspace*{-2em}
\label{tab:par2}
\end{table}

We see that, for both solvers,
the performance (avg. column) on normalized benchmarks is generally comparable with
that of non-normalized benchmarks, showing
that, on average, normalization does not appear to greatly affect runtime.

More importantly, the MAD score
improves significantly in \emph{all} cases, sometimes by more than an order
of magnitude (e.g., in the Equality division).
This strongly suggests that our normalization algorithm improves stability.
Even for benchmarks for which the normalizer completely
fails to produce normal forms (e.g., the \benchmariposa benchmarks that always
still have 10 distinct benchmarks after normalization), the stability improves,
sometimes significantly.  This provides promising evidence that full
normalization is not required for improving stability.  An improvement in
similarity may be sufficient.


\subsection{Code Optimizations} \label{sec:optimization}
Some of the benchmarks in our benchmark sets are extremely large, with sizes up
to hundreds of megabytes and with up to hundreds of thousands of 
user-defined symbols and assertions.
In this section, we discuss three optimizations that are crucial for
scalability of our algorithm on such benchmarks.

\paragraph{Pattern Compression}
As presented in~\Cref{sec:formal}, formulas are sequences whose
size corresponds to the size of their abstract syntax tree.
However, internally, SMT solvers represent formulas as directed acyclic graphs
(DAGs), because they often share subterms.  The DAG representation can be
exponentially more concise.  It is thus essential to also
incorporate this kind of sharing in our representation of patterns.
In our implementation, we represent a pattern as an index into a dictionary of
the subterms
in the tree representation of the pattern.  For example, the formula $\langle +,*,x,x,*,x,x\rangle$
would be represented as the index $\textbf{2}$ into the dictionary $(\textbf{1}:\langle *,x,x\rangle, \textbf{2}: \langle +,\textbf{1},\textbf{1})$.


\paragraph{Super-pattern Computation}
A naive way to compute the super-pattern for a symbol $s$ is to traverse all 
of the assertions and look up the role of $s$
in every assertion. However, this approach is not scalable for large benchmarks.
To address this issue, we perform a one-time indexing pass over the assertions
to create arrays for each symbol $s$ containing
pointers to only the assertions in which $s$ appears. This way, we only
need to traverse these assertions when calculating the super-pattern for $s$.
This optimization reduces the number of lookups from hundreds of millions to
hundreds of thousands on some problematic benchmarks, improving the
normalization time by more than an order of magnitude.

\paragraph{Super-pattern Compression}
This optimization leverages the sparsity of super-patterns: in large
benchmarks, each symbol  typically appears in only a small subset of the assertions. 
Thus, the role of a symbol is likely to be 0 in most assertions. 
To exploit this, we represent super-patterns as sequences consisting of
the non-zero role values interleaved with counts of the number of zeros.


\section{Conclusion}

Our normalization algorithm is a promising step towards a more stable and
predictable SMT solving experience.  It generally scales well and is applicable to a wide range of benchmarks and logics.
We have shown that it can often produce a single unique normal form
for a set of mutated benchmarks, and even when it cannot, it greatly increases
the similarity of the benchmarks.  We also saw that, on average,
it improves stability without significantly affecting performance.

In future work, we intend to expand our normalizer to handle other mutations such
as antisymmetric operator replacement and operand swapping for commutative
operators.  We also plan to explore whether additional normalization techniques
can be used to further improve our results.
Finally, we plan to investigate the use of our normalization algorithm
as a preprocessing step to improve the hit rate of applications that cache
formulas in order to reuse solving results.

\label{sec:conclusion}

\bibliographystyle{plain}
\bibliography{bibliography}

\pagebreak
\section{Appendix}
\label{sec:appendix}

\begin{lemma}\label{lem:csp}
    Let $\seqassertion$ be a sequence of formulas.  If $\seqassertion_R = R(\seqassertion_P)$, where $\seqassertion_P\in \pat{C(S(\seqassertion))}$ and $R$ is
    a renaming, then $\pat{C(S(\seqassertion_R))} = \pat{C(S(\seqassertion))}$.
\end{lemma}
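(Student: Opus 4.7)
The plan is to prove the lemma by chaining together three auxiliary observations about how shuffling, commutative reordering, relabeling, and the pattern operation interact. The target is $\pat{C(S(\seqassertion_R))} = \pat{C(S(\seqassertion))}$, where $\seqassertion_R = R(\seqassertion_P)$ and $\seqassertion_P \in \pat{C(S(\seqassertion))}$.

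First, I would prove that any injective relabeling $\rho$ of non-theory symbols commutes with both $S$ and $C$: shuffling permutes only top-level formulas and commutative reordering only swaps operands of commutative operators, so neither is affected by the identities of the leaves. Lifting $\rho$ element-wise to sets, this gives $C(S(\rho(\gamma))) = \rho(C(S(\gamma)))$. Second, I would prove that pattern is invariant under injective relabeling, i.e., $\pat{\rho(\gamma)} = \pat{\gamma}$: since $\pat{\cdot}$ depends only on the order in which distinct non-theory symbols first appear, and injectivity preserves both that order and which later positions repeat earlier symbols, the two sequences agree index by index. Third, I would observe that $C \circ S$ induces an equivalence relation on formula sequences (shuffles and commutative reorderings are individually invertible and commute with each other), so $\seqassertion'' \in C(S(\seqassertion))$ implies $C(S(\seqassertion'')) = C(S(\seqassertion))$.

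With these in hand, I would fix $\seqassertion'' \in C(S(\seqassertion))$ such that $\seqassertion_P = \pat{\seqassertion''}$. The key bridging observation is that the map $\seqassertion'' \mapsto \pat{\seqassertion''}$ acts on $\seqassertion''$ as an injective relabeling $\rho$ of its user-defined symbols into pattern symbols, so all three facts above apply to both $\rho$ and $R$. I would then close the argument with the chain
\[
\pat{C(S(\seqassertion_R))} = \pat{R(C(S(\seqassertion_P)))} = \pat{C(S(\seqassertion_P))} = \pat{\rho(C(S(\seqassertion'')))} = \pat{C(S(\seqassertion''))} = \pat{C(S(\seqassertion))},
\]
where the first and fourth equalities use fact (1), the second and fifth use fact (2), and the last uses fact (3).

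The main obstacle is the second observation, since the pattern operation is defined on a single formula and lifted to sequences through $\Conj$ and $\Unconj$. Proving invariance cleanly at the sequence level requires treating pattern symbols occurring in intermediate expressions as non-theory symbols for re-patterning, and verifying that the global left-to-right numbering across the concatenated sequence is unchanged when an injective relabeling is applied before patterning. Once this invariance is stated and justified at the sequence level, the remaining steps reduce to routine substitution using the definitions of $S$, $C$, and $\pat{\cdot}$.
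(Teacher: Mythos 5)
Your proposal is correct and follows essentially the same route as the paper's proof: commuting the relabeling with $C\circ S$, using invariance of $\patfun$ under injective relabeling, and using that $C\circ S$ is an equivalence so $\seqassertion''\in C(S(\seqassertion))$ implies $C(S(\seqassertion''))=C(S(\seqassertion))$. The only cosmetic difference is that you split the composite map $R\circ\patfun$ into two separate relabelings and apply the commutation and invariance facts twice, whereas the paper handles $R\circ\patfun$ as a single unit; your explicit statement of pattern invariance under arbitrary injective relabelings is a slightly cleaner packaging of the paper's remark that the pattern of a renaming of a pattern equals the pattern.
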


\begin{proof}

    First, note that for any formula sequence $\seqassertion$, we have
    $R(\pat{C(S(\seqassertion))})=C(S(R(\pat{\seqassertion})))$.  This is because
    the first generates a set of equivalent formula sequences and then renames the symbols,
    while the second renames the symbols and then generates a set of equivalent
    formula sequences, but these two transformations are indpendent, so the
    result is the same, regardless of which order they are done in.

    Now, we can write $\seqassertion_R=R(\pat{\seqassertion'})$ for some
    $\seqassertion'\in C(S(\seqassertion))$.  It follows that
    $\pat{C(S(\seqassertion_R))}=\pat{C(S(R(\pat{\seqassertion'})))}=\pat{R(\pat{C(S(\seqassertion'))})}=\pat{C(S(\seqassertion'))}$,
    since computing the pattern of a renaming of a pattern is just the same as
    computing the pattern.  It remains to show that $\pat{C(S(\seqassertion'))}=\pat{C(S(\seqassertion))}$.

    But $\seqassertion'\in C(S(\seqassertion))$, which
    means that $\seqassertion$ can be obtained from $\seqassertion'$ by
    swapping zero or more commutative operators and permuting the order of the
    formulas in $\seqassertion'$.  But then, any element of $C(S(\seqassertion))$ can be obtained
    from $\seqassertion'$ by swapping commutative operators and permuting the
    order, so $C(S(\seqassertion'))$ is just the same as $C(S(\seqassertion))$.
    Thus, $\pat{C(S(\seqassertion'))}=\pat{C(S(\seqassertion))}$.
\end{proof}

\begin{lemma}\label{lem:min}
    For any formula $\seqassertion$, $\pat{C(S(\seqassertion)))}$ has a unique minimal element, according to
    $\prec$.
\end{lemma}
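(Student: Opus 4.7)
My plan is a short finiteness-plus-total-order argument. I would first argue that $\pat{C(S(\seqassertion))}$ is a finite set: the set $S(\seqassertion)$ of permutations of an $n$-formula sequence has exactly $n!$ elements; for each such permutation, $C$ produces at most $2^k$ sequences, where $k$ is the total number of commutative operator occurrences in $\seqassertion$; and $\patfun$ is a function, so taking patterns does not increase the cardinality. Hence $\pat{C(S(\seqassertion))}$ is a finite set of sequences.

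Next I would observe that every element of $\pat{C(S(\seqassertion))}$ has the same length as $\seqassertion$, since shuffling formulas, swapping operands of commutative operators, and replacing user-defined symbols by pattern symbols all preserve the length of the underlying prefix-notation sequence. So $\prec$ compares sequences of a common length drawn from the (totally ordered) alphabet of theory symbols and pattern symbols, and thus restricts to a total order on $\pat{C(S(\seqassertion))}$.

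Finally, I would invoke the standard fact that any nonempty finite totally ordered set has a unique minimal element, from which the lemma follows immediately. I do not expect any real obstacle here; the only subtlety worth stating carefully is that all pattern sequences under consideration share the same length, so that lexicographic comparison behaves like a total order rather than merely a partial order on sequences of varying lengths. Everything else is bookkeeping and does not merit further calculation in the proof.
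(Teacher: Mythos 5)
Your proof is correct and takes essentially the same route as the paper's: both arguments rest on the fact that $\prec$ is a total order on the (concatenated) pattern sequences, so a least element exists and is unique. You are in fact somewhat more careful than the paper, which leaves the finiteness of $\pat{C(S(\seqassertion))}$ implicit even though totality alone would not guarantee a minimum on an infinite set; your explicit cardinality bound closes that small gap.
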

\begin{proof}
    When comparing two elements of $\pat{C(S(\seqassertion))}$, each of which is a sequence of
    patterns, we concatenate the patterns together and compare them with
    $\prec$.  Since $\prec$ is a total order, one of them is always smaller.
    Thus there is a minimal element of $\pat{C(S(\seqassertion))}$.
\end{proof}


\setcounter{section}{3}
\setcounter{theorem}{7}
\begin{theorem}
    Function $\N$ is normalizing.
  \end{theorem}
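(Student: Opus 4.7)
\medskip

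My plan is to verify the two conditions in the definition of a normalizing function separately, exploiting the two lemmas that have already been proved. The definition of $\N$ already hands us a specific $\seqassertion'$ (the $\prec$-minimum of $\pat{C(S(\seqassertion))}$) and a specific renaming ($\R$), so the bulk of the work is really in showing condition~(2): namely, invariance under scrambling plus renaming.

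For well-definedness and condition~(1), I would first observe that \Cref{lem:min} guarantees that $\pat{C(S(\seqassertion))}$ has a unique $\prec$-minimum, so $\seqassertion'$ in the definition of $\N$ is unambiguous and hence $\N(\seqassertion)$ is a single sequence rather than a set. Condition~(1) is then immediate from unrolling the definition: $\N(\seqassertion) = \R(\seqassertion')$ with $\seqassertion' \in \pat{C(S(\seqassertion))}$, witnessed by the renaming $R = \R$.

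The substantive step is condition~(2). Suppose $S_1 = R_1(S'_1)$ and $S_2 = R_2(S'_2)$ with $S'_1, S'_2 \in \pat{C(S(\seqassertion))}$ and $R_1, R_2$ renamings. Applying \Cref{lem:csp} twice gives
\[
  \pat{C(S(S_1))} \;=\; \pat{C(S(\seqassertion))} \;=\; \pat{C(S(S_2))}.
\]
Thus $S_1$ and $S_2$ generate exactly the same pool of candidate patterns when we expand them via the shuffle/commutative/pattern pipeline. By \Cref{lem:min}, this common set has a unique $\prec$-minimum $\seqassertion^\ast$. By definition of $\N$, both $\N(S_1)$ and $\N(S_2)$ equal $\R(\seqassertion^\ast)$, so $\N(S_1) = \N(S_2)$, as required.

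The only place I expect any subtlety is in making sure \Cref{lem:csp} truly applies in the form needed for condition~(2): the lemma as stated begins with a \emph{renamed pattern} of $\seqassertion$, but here we are handed $S_1 = R_1(S'_1)$ where $S'_1$ is already an element of $\pat{C(S(\seqassertion))}$, which fits the hypothesis exactly. Beyond that, the theorem is a direct composition of the two lemmas, so no further heavy lifting is needed; I would keep the write-up short and let \Cref{lem:csp} and \Cref{lem:min} do the work.
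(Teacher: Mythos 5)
Your proof is correct and follows essentially the same route as the paper's: condition~(1) is immediate from the definition of $\N$, and condition~(2) follows by applying \Cref{lem:csp} to both $S_1$ and $S_2$ and then invoking the unique $\prec$-minimum from \Cref{lem:min}. The extra remarks on well-definedness and on checking the hypothesis of \Cref{lem:csp} are sound but do not change the argument.
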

  \begin{proof}
    Notice that $\N(\seqassertion) = \R(\seqassertion')$, where $\R$ is the
    identity function and $\seqassertion'$ is the $\prec$-minimal element of
    $\pat{C(S(\seqassertion))}$.  It is not hard to see that $\R$ is also a
    renaming, so the first requirement is met.  Now, let
    $M_1= R_1(M'_1)$ and $M_2= R_2(M'_2)$, with
    $M'_1,M'_2 \in \pat{C(S(\seqassertion))}$, where $R_1,R_2$ are renamings.
    By Lemma~\ref{lem:csp}, $\pat{C(S(M_1))} = \pat{C(S(\seqassertion))} = \pat{C(S(M_2))}$.  But
    $\pat{C(S(\seqassertion))}$ has a unique minimal element, $\seqassertion'$
    by Lemma~\ref{lem:min}.
    Thus, $\N(M_1) = \seqassertion' = \N(M_2)$.
  \end{proof}

  \setcounter{section}{3}
  \setcounter{theorem}{8}
  \begin{theorem}
    Let N be a normalizing function.  Then, computing $N(\seqassertion)$ for an arbitrary
    $\seqassertion$ is as hard as solving graph isomorphism.
  \end{theorem}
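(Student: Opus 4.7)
The plan is to give a polynomial-time reduction from graph isomorphism to the problem of computing $N$: given graphs $G_1$ and $G_2$, I would produce formula sequences $\phi(G_1)$ and $\phi(G_2)$ such that $G_1 \cong G_2$ if and only if $N(\phi(G_1)) = N(\phi(G_2))$, so that any algorithm computing $N$ directly decides graph isomorphism.

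For the encoding, given $G = (V, E)$ with $V = \{v_1, \ldots, v_n\}$, I would introduce a fresh user-defined constant $x_i$ for each vertex $v_i$, and for each edge $\{v_i, v_j\} \in E$ include the equality assertion $(x_i = x_j)$. Since the theory symbol $=$ is commutative, swapping its operands falls under the commutative reordering set $C$, which correctly models the symmetry of undirected edges. (Isolated vertices can be handled by also including a trivial assertion such as $(x_i = x_i)$ for each $i$.) The resulting $\phi(G)$ has size linear in $|V| + |E|$.

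For the forward direction, an isomorphism $\pi : V(G_1) \to V(G_2)$ provides a bijection between the user-defined constants of the two encodings; combined with a shuffling of assertions and commutative reorderings of the $=$-operands, it demonstrates that $\phi(G_2) = R(\psi_P)$ for some $\psi_P \in \pat{C(S(\phi(G_1)))}$ and some renaming $R$. By \Cref{lem:csp}, the sets $\pat{C(S(\phi(G_1)))}$ and $\pat{C(S(\phi(G_2)))}$ coincide, and the second clause of the normalizing-function definition yields $N(\phi(G_1)) = N(\phi(G_2))$.

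For the converse, if $N(\phi(G_1)) = N(\phi(G_2))$, the first clause lets me write this common value as $R_1(P_1) = R_2(P_2)$ for canonical patterns $P_i \in \pat{C(S(\phi(G_i)))}$ and renamings $R_i$. Because applying the pattern operation to $R(P)$ recovers the canonical pattern $P$ unchanged, both sides collapse to a single pattern $P$ that lies in the intersection $\pat{C(S(\phi(G_1)))} \cap \pat{C(S(\phi(G_2)))}$. Preimages $\psi_i \in C(S(\phi(G_i)))$ with $\pat{\psi_i} = P$ then share the same pattern, and in our edge encoding this forces a bijection between the constants of $\phi(G_1)$ and those of $\phi(G_2)$ that sends each equality assertion to an equality assertion, i.e., a graph isomorphism. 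The main obstacle I expect is justifying cleanly that two formulas sharing a canonical pattern must agree up to a consistent renaming of their user-defined symbols, and that this renaming, specialized to our edge encoding, preserves edges; this amounts to a careful syntactic argument about first-occurrence orderings.
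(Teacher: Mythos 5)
Your proposal is correct and follows essentially the same route as the paper: both encode a graph as a sequence of equality assertions over fresh constants (relying on the commutativity of $=$ to model undirected edges), argue that shuffling, operand reordering, and renaming preserve the represented graph, and use injectivity of renamings to recover a common pattern (and hence an isomorphism) from $N(\phi(G_1)) = N(\phi(G_2))$. The only cosmetic difference is that the paper marks the vertex set with a unary predicate $f$ rather than your self-equalities, and the ``syntactic obstacle'' you flag---that two sequences with the same pattern differ only by a consistent injective renaming---is exactly the fact the paper also relies on (and treats as evident).
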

  
  \begin{proof}
  Let $G_1=(V_1,E_1)$ and $G_2=(V_2,E_2)$ be two undirected graphs.  Assume
  without loss of generality that $V_1 \cap V_2 = \emptyset$.  For each
  $v\in\{V_1\cup V_2\}$, let $u(v)$ map $v$ to some unique user-defined symbol
  (i.e., $u$ is injective). Now, define
  $A_i = \{ \langle f,x\rangle \mid x\in V_i\} \cup \{ \langle =,x,y \rangle \mid \exists\,(v,v')\in E_i.\:\{x,y\} = \{u(v),u(v')\}\}$,
  where $f$ is some Boolean predicate.  Note that $G_i$ can be recovered from $A_i$,
  simply by creating a vertex for every user-defined symbol appearing as an
  argument of $f$ in $A_i$
  and then adding an edge between $u^{-1}(x)$ and $u^{-1}(y)$ whenever the
  formula $\langle
  =,x,y\rangle$ appears in $A_i$.  Furthermore, shuffling the formulas in $A_i$,
  permuting the order of the operands in equalities (the only commutative
  operator appearing in $A_i$), or renaming the user-defined symbols does not change the
  structure of the graph being represented.  Thus, all elements of
  $\pat{C(S(A_i))}$ represent isomorphic graphs.
  
  Now, suppose $N(A_1) = N(A_2)$,
  For $i\in[1,2]$, we know that $N(A_i) = R(A_i')$ for some $A_i' \in \pat{C(S(A_i))}$.
  Furthermore, because renamings are injective, $A'_1 = \pat{R(A_1}) = \pat{N(A_1}) =
  \pat{N(A_2}) = \pat{R(A_2}) = A'_2$.  But for $i\in[1,2]$, the graph represented by $A_i$ is
  isomorphic to the graph represented by $A'_i$.  Thus the graph represented by
  $A_1$, namely $G_1$, is isomorphic to the graph represented by $A_2$, which is
  $G_2$.
  
  On the other hand, suppose $G_1$ and $G_2$ are isomorphic.  Let $h:V_1\to V_2$
  be the isomorphism function for the graph vertices.  Let $R$ be a renaming such
  that $\forall\,v.\:R(\pat{v}) = h(v)$.  Applying this renaming to
  $A_1$ must be equivalent (modulo order) to $A_2$.  In other words, $A_1 \in
  R(\pat{S(A_2)})$.  Then, because $N$ is normalizing, we must have $N(A_1)=N(A_2)$.
  Thus, computing $N$ is at least as difficult as graph isomorphism.
  \end{proof}

\end{document}